\documentclass[reprint,amsmath,amssymb,notitlepage,superscriptaddress,nofootinbib,preprintnumbers,prl,twocolumn]{revtex4-2}
\usepackage{xcolor}
\usepackage{setspace}
\usepackage{ulem}
\usepackage{graphicx}
\usepackage{dcolumn}

\usepackage{bm}

\newtheorem{lemma}{Lemma}
\makeatletter
\newcommand{\xRightarrow}[2][]{\ext@arrow 0359\Rightarrowfill@{#1}{#2}}
\makeatother

\usepackage{hyperref}
\usepackage[english]{babel}
\usepackage{braket}
\usepackage{etoolbox}

\AtBeginDocument{%
  \providecommand{\language}[1]{}%
}

\begin{document}
\singlespacing
\title{Compressing Quantum Fisher Information}

\author{Rui Jie Tang}
\email{ruijie.tang@mail.utoronto.ca}
 \affiliation{CQIQC and Department of Physics, University of Toronto, 60 Saint George St., Toronto, ON M5S 1A7, Canada}

\author{Jeremy Guenza Marcus}
\email{jguenzama@gmail.com}
 \affiliation{CQIQC and Department of Physics, University of Toronto, 60 Saint George St., Toronto, ON M5S 1A7, Canada}
 
\author{Noah Lupu-Gladstein}
\affiliation{National Research Council of Canada, 100 Sussex Drive, Ottawa, Ontario K1N 5A2, Canada}

\author{Arthur O.T. Pang}
 \affiliation{CQIQC and Department of Physics, University of Toronto, 60 Saint George St., Toronto, ON M5S 1A7, Canada}
 
\author{C. Pria Dobney}
\affiliation{CQIQC and Department of Physics, University of Toronto, 60 Saint George St., Toronto, ON M5S 1A7, Canada}

\author{Giulio Chiribella}
\affiliation{QICI Quantum Information and Computation Initiative, School of Computing and Data Science, The University of Hong Kong, Pokfulam, Hong Kong}
\email{giulio@hku.hk}

\author{Aephraim M. Steinberg}
\affiliation{CQIQC and Department of Physics, University of Toronto, 60 Saint George St., Toronto, ON M5S 1A7, Canada}

\author{Y. Batuhan Yilmaz}
\affiliation{CQIQC and Department of Physics, University of Toronto, 60 Saint George St., Toronto, ON M5S 1A7, Canada}

\date{\today}

\begin{abstract}  We show that the quantum Fisher information about any phase parameter encoded in a family of pure quantum states can be faithfully compressed into a single qubit, accompanied by a logarithmic amount of classical bits.  When the phase is encoded into many identical copies of a  qubit state on the equator of the Bloch sphere, we show that the compression can be implemented sequentially, by iteratively compressing pairs of qubits into a single qubit. We experimentally demonstrate this building block  in a photonic setup, developing two alternative compression strategies, based on Type-I fusion gate and a postselected implementation of the {\tt CNOT} gate. 
\end{abstract}

\maketitle

\section{Introduction}

The quantum Fisher information (QFI) \cite{helstrom1969quantum,paris2009quantum} is a cornerstone of quantum sensing. It quantifies how sensitive a quantum state is to changes in an unknown parameter, with a high QFI indicating that a state undergoes a significant change even for a small variation in the parameter, thereby enabling more precise parameter estimation. Quantitatively, this relation is made precise  by the Quantum Cram\'er-Rao Bound (QCRB), which provides a fundamental limit  on measurement precision in terms of the QFI, and serves as a theoretical underpinning for the design of quantum sensors and quantum metrology setups \cite{giovannetti_advances_2011}. 



The typical process of quantum sensing involves preparing multiple copies of a probe in a quantum state that has high QFI for a given interaction with a system of interest,
 and progressively refining the information about the parameter through subsequent measurements. In applications such as remote sensing \cite{takeuchi2019quantum} and sensor networks \cite{komar2014quantum,proctor2018multiparameter}, however, the location where the sensors collect information and the location where the measurements extract this information may be different. It then becomes important to find ways  to efficiently transfer  the QFI from one location to another, and/or how to efficiently store at one location before the measurements take place.     
 
 A possible strategy to achieve efficient transfer/storage of the QFI is to compress the state of the sensors into a smaller number of (generally correlated) qubits. Specifically, Plesch and Bu\v zek \cite{plesch_efcient_2010} showed that  $N$ copies of an arbitrary pure state of two unknown parameters can be compressed into an exponentially smaller number of $\log_2(N+1)$  qubits, using the Schur-Weyl transformation \cite{bacon_efcient_2006}. An experimental proof-of-principle demonstration of this fact was provided by Rozema \textit{et al.} \cite{rozema_quantum_2014}.  
 Further research then extended the multi-copy compression  paradigm from pure to mixed states  \cite{yang2016efficient,yang2016optimal,yang_compression_2018}, eventually establishing the general compression limit of $\frac{f}{2}\log_2(N+1)$ \cite{yang_compression_2018-1} for $N$ copies  of a quantum state chosen from a family of states characterized by $f$ independent  parameters.   
  However,  preservation of the quantum state is not necessary for transferring/storing information about a specific physical parameter:  in principle, one could still achieve high sensitivity to that parameter without transferring/storing the full quantum state. For example, transferring information about the time parameter in the Hamiltonian evolution was shown to require a number of qubits that depends only on the accuracy achievable  with the best time measurement on the original state \cite{yang2018quantum}. The measure of accuracy adopted in this work, however, was different from the quantum Fisher information, and the question of the minimum number of qubits needed 
 to faithfully transfer/store the quantum Fisher information present in a quantum state remained unaddressed so far.   

 In this paper, we show that a \textit{single} qubit plus a logarithmic amount of classical information is sufficient to compress the quantum Fisher information about a phase parameter encoded in an arbitrary pure quantum state.  The compression protocol takes a particularly simple form in the case of multi-copy qubit states on the equator of the Bloch sphere, which have many applications in quantum metrology for precision measurements ranging from magnetometry via spin precession estimation \cite{baumgart2016ultrasensitive,danilin2018quantum} to precision interferometry \cite{degen2017quantum,patel2025review}.      We show that the compression protocol for multi-copy equatorial qubits can be decomposed into a tree of elementary compression protocols, in which the quantum Fisher information contained into two  qubits is compressed into the state of a single qubit.  In this way,  the original multi-copy state of $N$ qubits is compressed into a single qubit plus $\log_2(N-1)$ classical bits.   We experimentally demonstrate  the building block of this compression protocol on a photonic platform, introducing two alternative architecture to achieve the basic $2\to 1$ compression step using a Type-I fusion gate and a postselected {\tt CNOT}, respectively.

\section{Compressing the quantum Fisher information}
Consider a  family of quantum states of the form 
\begin{align}\label{states}|\Psi_\theta\rangle   =  e^{-i\theta H} |\Psi\rangle \,,
\end{align}
where $\theta \in \mathbb R$ is a real parameter, $|\Psi\rangle$ is a fixed pure state, and $H$ is a self-adjoint operator acting on the system's Hilbert space.  
The quantum Fisher information of the state $|\Psi_\theta\rangle$ is given by  \begin{align}
  F(|\Psi_\theta  \rangle) &= 4( \langle \partial_\theta\Psi_\theta  |\partial_\theta \Psi_\theta  \rangle-|\langle\Psi_\theta  |\partial_\theta\Psi_\theta  \rangle|^2) \, ,
     \label{qfi}
\end{align}
and provides a lower bound for the variance of any unbiased estimate of the parameter $\theta$ via the Quantum Cram\'er-Rao bound ${\rm Var}(\theta) \ge F  (|\Psi_\theta  \rangle)^{-1}$. 

In the following, we will show that the QFI of the states Eq. (\ref{states}) can be squeezed into a single qubit, accompanied by a logarighmic amount of classical information.  


\subsection{The case of multi-copy equatorial qubits}

To get started, we will provide an explicit QFI compression protocol in the special case where the  quantum system consists of $N$ qubits, with each qubit  prepared in the equatorial state $|e_\theta  \rangle=(|0\rangle+e^{i\theta}|1\rangle)/\sqrt 2$.  In this case,  the overall $N$-qubit state  has the tensor product form $|\Psi_\theta\rangle  = |e_\theta\rangle^{\otimes N}$, and Eq. (\ref{qfi}) yields  a QFI equal to $N$.

To construct the QFI compression protocol, we start from a basic building block that transforms two equatorial qubits into one, while preserving the QFI.  The bulding block consists into applying a {\tt CNOT} gate to the two qubits, and performing a computational basis measurement on the second qubit.  Applying  a {\tt CNOT} gate to  two equatorial qubits with  phases $\theta_1$ and $\theta_2$ yields the state 
\begin{align}
 {\tt CNOT}\,  (  |e_{\theta_1}\rangle \otimes |e_{\theta_2} \rangle)  &=\frac{ |e_{\theta_1 + \theta_2} \rangle   \otimes  |0\rangle  +  e^{i\theta_2}   |e_{\theta_1-\theta_2}\rangle  \otimes  |1\rangle}{\sqrt 2}\, .  \label{aftercnot}
\end{align}
Then, measuring the second qubit in the computational basis $\{|0\rangle, |1\rangle\}$ collapses the first qubit either in the state $|e_{\theta_1+\theta_2}\rangle$ or in the state  $|e_{\theta_2  -  \theta_1}\rangle$.  Overall, this basic protocol randomly performs either an addition or a subtraction of the phases, heralded by a measurement outcome.  \\

In the case $\theta_1 = \theta_2  = \theta$, the protocol either doubles the phase, or removes it altogether.  Notably, the QFI is preserved on average:  indeed,  the QFI of the state $|e_{2\theta}\rangle$ is $4\;\text{rad}^{-2}$ (as one can easily see from Eq. (\ref{qfi})), while the QFI of the state $|e_0\rangle$ is 0, and the average is $\frac 12 F  (|e_{2\theta}\rangle)  +  \frac 12 F  (|e_{0}\rangle) = 2\;\text{rad}^{-2}$, exactly equal to the QFI of the original two-qubit state $|e_\theta\rangle \otimes |e_\theta\rangle$. \\ 

To compress the QFI of $N$ qubits, we iterate the sum/difference protocol for $N-1$ times, using the first qubit as the control, as illustrated in Fig. \ref{fig:cnot_circuit}.  When the protocol is applied to $N$ equatorial qubits with phases $\theta_1, \theta_2 ,
\dots,  \theta_N$, it collapses the first qubit into the equatorial state with phase 
\begin{align}
\theta_{\rm tot} =\theta_1  +  \sum_{j=2}^N  \,  (-1)^{m_j}  \,  \theta_j \,,
\end{align}where $m_j \in  \{0,1\}$ is the outcome of the measurement on the $j$-th qubit.  In other words, repeated applications of the sum/difference protocol yield all possible linear combinations of the phases $\theta_1, \dots,  
\theta_N$ with $\pm 1$ coefficients.    

In particular, when the phases are all equal ($\theta_1  =  \theta_2  = \dots  = \theta_N$) and $k$ measurements yield outcome 0,  the protocol leaves the first qubit in the equatorial state with phase $(2 k  + 2  -  N) \theta$, which has QFI equal to $(2k+2-N)^2$.    Since the probability of obtaining outcome $0$ for $k$ times follows the binomial distribution 
\begin{align}
B_{N-1, k}   = \frac 1{2^{N-1}}  \, \begin{pmatrix}  N-1  \\  k  \end{pmatrix}\,,
\end{align}
the average QFI is $\sum_{k=0}^{N-1}  \,  B_{N-1,  k}   \,  (2k+2-N)^2=  N$, and coincides with the QFI of the original $N$-copy state. \\

In summary, all the QFI of the original $N$-copy state has been transferred to a single qubit, and can be accessed  by anyone who has access to the value of the measurement outcomes, or, more simply, to the value of the integer $k$.  Note that, since $k$ can take $N$ possible distinct values, our protocol compresses the QFI into a single qubit plus $\lceil \log_2 N \rceil$ bits of classical information.

\subsection{General Case}
 
We now extend the protocol from equatorial qubits to arbitrary quantum states of arbitrary (finite-dimensional) quantum systems.    The idea is to   decompose  the state $|\Psi_\theta\rangle  $ in Eq. (\ref{states}) as 
\begin{align}\label{psidecomp}
|\Psi_\theta\rangle    =  \sum_{E} \,  \sqrt{ p(E)}  \,  e^{-i\theta E}  |E\rangle    \, ,
\end{align}    
where $|E\rangle  $ is an eigenvector of the generator $H$ for eigenvalue $E$, and $p(E)$ is a probability distribution.  Using this decomposition, the QFI from Eq. (\ref{qfi}) can be written as
   \begin{align}
 F (  |\Psi_\theta\rangle   )      &  =  \sum_E   \,  p(E)  \,   E^2    -   \left(  \sum_E  \,  p(E) \,  E   \right)^2   \, .\label{qfipsi}
   \end{align}
  
Note that all the states $\{|\Psi_\theta\rangle  \}$ belong to  the subspace spanned by the vectors $\{|E\rangle  \}$.   We now construct a  measurement on this subspace, with measurement operators 
\begin{align}\label{Mk}
M_k  :=  \sum_E  \,  \sqrt{ p(  k|E) } \, |E\rangle  \langle  E| \qquad  k\in  \{1, \dots,  K\}\, ,
\end{align}  
where $k$ is the measurement outcome,  $K$ is the number of outcomes, 
and  $p(k|  E)$ is a conditional probability distribution, specified in the Appendix.  

When the measurement is performed on the state $|\Psi_\theta\rangle  $, the probability of the outcome $k$ is 
\begin{align}
p_k   : =  \|    M_k  |\Psi_\theta\rangle  \|^2    =  \sum_E    \,  p(k|E)  \,  p(E)  \,,
\end{align}
and, when $p_k$ is non-zero, the post-measurement state is   
\begin{align}
\nonumber |\Psi_{\theta, k}\rangle     & :  =   \frac{   M_{k}  |\Psi_\theta\rangle  }{\|  M_k |\Psi_\theta\rangle   \|  }   \\
&=    \sum_{E}     \sqrt{  p(E|k) }    \,  e^{-i\theta E}  \,        | E\rangle   \, ,
\end{align}
where we  defined $p (E|k)    : =   p(k|E) \,  p(E)/p_k$.  The QFI of this state is 
 \begin{align}
 F (  |\Psi_{\theta ,k}\rangle   )      &  =  \sum_E   \,  p(E|k)  \,   E^2    -   \left(  \sum_E  \,  p(E|k) \,  E   \right)^2   \, ,   \end{align}
 as one can see from Eq. (\ref{qfipsi}) by replacing $p(E)$ with $p(E|k)$.

On average, the QFI  of the post-measurement states  is
     \begin{align}
\nonumber F_{\rm av}  &=     \sum_{k=1}^K  \,  p_k\,   F  (|\Psi_{\theta, k}\rangle)  \\
&  =   \sum_E  p(E)   \,  E^2      -  \sum_{k=1}^K  p_k     \left(    \sum_E  p(E|k)   \,E    \right)^2 \, ,
   \end{align}
   Comparing this expression with Eq. (\ref{qfipsi}) we obtain that the average QFI  is equal to the QFI of the original state $|\Psi_\theta\rangle $ if and only if 
   \begin{align}\label{sameaverage}
 \sum_{k}  p_k      \left(    \sum_E  p(E|k)   \,E    \right)^2     =   \left( \sum_E  p(E)   \,E\right)^2            \, ,
     \end{align}
In the Appendix, we show that     it is  always possible to satisfy this condition with probability distributions $p(E|k)$ that have support on at most two distinct values of the energy.  In this way, every  state $|
   \Psi_{\theta,  k}\rangle  $  can be faithfully encoded   into a single qubit.  Moreover, we show that the number of measurement outcomes can be upper bounded as $K\le d-1$, where $d$ is the dimension of the subspace spanned by the eigenstates $\{|E\rangle\}$.   With the appropriate probability distributions $p(E|k)$ at hand, we can then reverse-engineer the probability distributions $p(k|E)$ in Eq. (\ref{Mk}), thereby determining the appropriate measurement operator $M_k$.\\

   All together, the above construction yields a measurement with $K\le d-1$ outcomes, 
   with the property that the conditional states can be faithfully encoded into a single qubit, and have on average the same QFI of the original state $|\Psi_\theta\rangle$.  Since the number of measurement outcomes is upper bounded by $d-1$, the amount of classical information needed to access the QFI is upper bounded by $\lceil \log_2 (d-1) \rceil$ bits. In summary, every one-parameter family of pure states of the form $|\Psi_\theta\rangle  = e^{-iH \theta} \, |\Psi\rangle$ can be compressed into a single qubit plus $\lceil \log_2 (d-1) \rceil$ bits in a way that preserves the total QFI.  
   
\begin{figure}[h]
    \centering
    \includegraphics[scale=0.75]{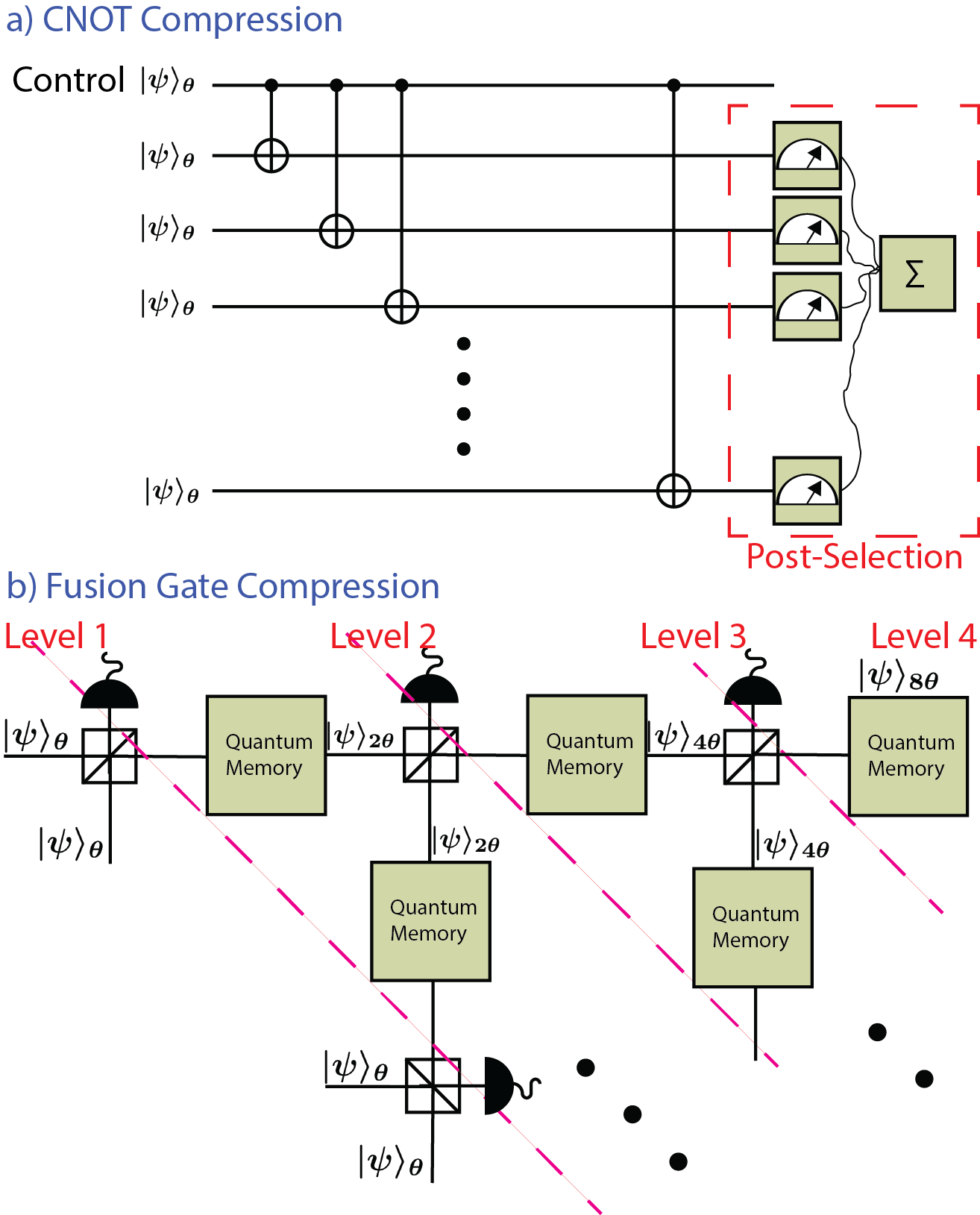}
    \caption{a) {\tt CNOT} cascade compression scheme that  compresses the QFI of arbitrary n-qubit inputs down to a single qubit. The qubit in the top path  acts as the control qubit for all subsequent {\tt CNOT} gates. Each {\tt CNOT} gate performs a two-qubit compression, and measuring the target qubits in the computational basis transfers all of the information to the control qubit.
    b) Type-I fusion gate compression scheme. Equatorial state $|\psi\rangle_{\theta}$ are input into Type-I fusion gates at the first level. If a single polarization qubit is detected at one of the PBS output ports, the qubit exiting the other port will contain the compressed QFI, with its phase doubled. Non-deterministically compressed qubits with the same phase from the previous level are sent into the next level for further compression. To ensure that two qubits are available at the same time, quantum memories need to be used as buffers to store the qubits until they are ready simultaneously. }
    \label{fig:cnot_circuit}
\end{figure}

\section{Experimental demonstration}
\subsection{CNOT Compression Scheme}

We now demonstrate the compression of the QFI for equatorial states using polarization qubits in a linear optical setup.  Our experimental scheme implements the basic building block for the compression of equatorial qubit states: the compression of two qubits into one using a {\tt CNOT} gate followed by a measurement on one of the qubits. 
 As shown in the previous section,  this basic building block  can then be cascaded to implement the compression of arbitrarily large numbers $N\ge2$ of equatorial qubits. \\

The setup is shown in Fig. \ref{fig:optical_setup}. The setup uses pairs of Type-I spontaneous parametric down-conversion (SPDC) 808 nm photons emitted at an opening angle of $3^\circ$ for both paths from a $1$ mm thick BBO nonlinear crystal. Our crystal is pumped by a $10\;\text{mW}$ 404 nm pulse. That is first prepared by using a mode-locked 140 fs Ti:Sapphire pulsed laser source emitting light at 808 nm and upconverting this pump to $404$ nm through photon upconversion with a BBO crystal. We detect $3310\pm58\;\text{pairs/sec}$ with Perkin-Elmer SPCM-AQ4C detectors. These detection signals are then fed into a home-built coincidence counter, which counts photon detection events across two detectors within a $4\;\text{ns}$ collection time window as coincidence events.

\begin{figure*}
\includegraphics[scale=0.7]{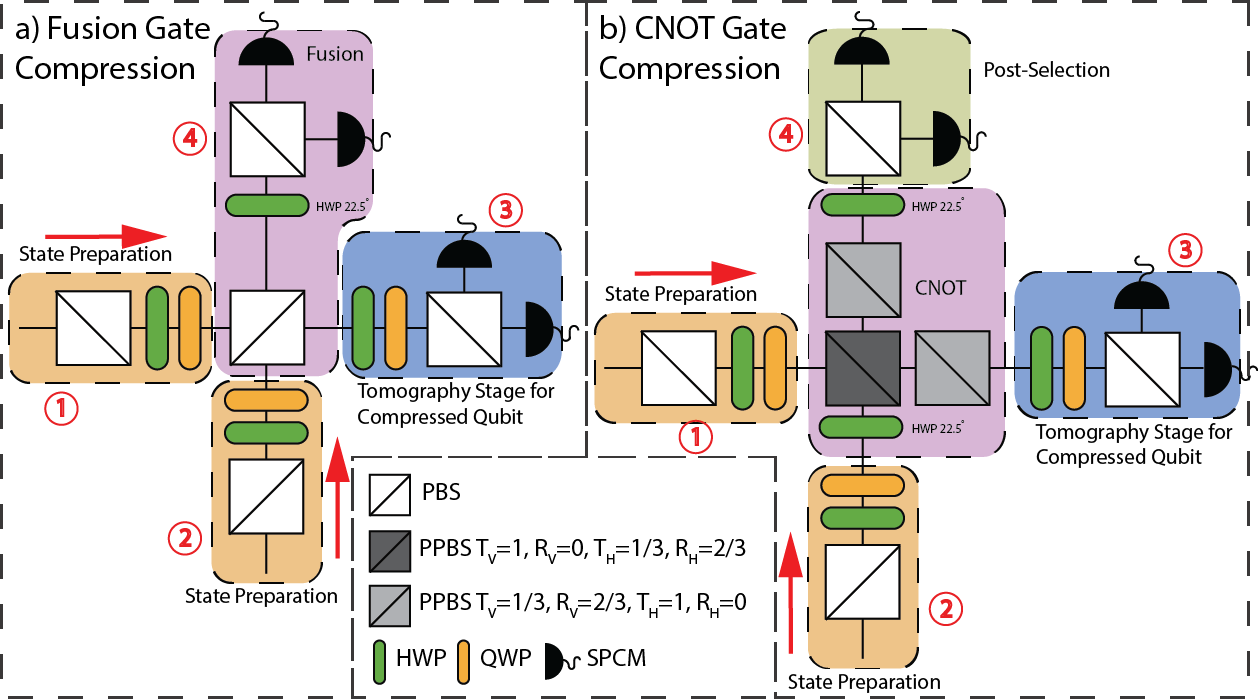}
    \caption{ Linear optical Implementation of QFI compression. Two photons generated via spontaneous parametric down-conversion (SPDC) serve as input polarization qubits for the compression schemes (a) and (b), entering through ports 1 and 2. The qubits are prepared in equatorial states with phase $\theta$ using a PBS, a HWP set to $\pi/8-\theta/4$, and a QWP set to $45^\circ$. A {\tt CNOT} gate is used to interfere the qubits, and post-selection occurs by projecting the qubit in path 4 on $|0\rangle_4$, after which the compressed qubit along path 3 is characterized with a tomography setup. }
    \label{fig:optical_setup}
\end{figure*}

Qubits in the experiment are the polarization degree of freedom of the pairs of down-converted photons. They prepared in identical equatorial states with phase $\theta$ using a combination of a polarizing beam splitter (PBS), a half-wave plate (HWP) and a quarter-wave plate (QWP) (Fig. \ref{fig:optical_setup}(a)).  \\

The {\tt CNOT} gate is realized  with partially polarizing beam splitters (PPBS), as proposed by Ralph et al. \cite{ralph_linear_2002}. This method is relatively simple to implement as it does not require auxiliary qubits, and successful {\tt CNOT} gate operations can be easily post-selected by observing two-photon coincidence events. We post-selected the successful compression results by projecting the target qubit onto the computation basis $|0\rangle\langle 0|_2$, which corresponds to $|H\rangle\langle H|_2$ in the physical basis. This projection is accomplished by detecting photons at the transmission port of the PBS in the target qubit path. Successfully detecting photons here projects the control qubit into its compressed state $|e_{2\theta}\rangle=\frac{1}{\sqrt{2}}(|H\rangle +e^{2i\theta}|V\rangle)$.

Next, we project this compressed qubit onto the diagonal basis $|\pm\rangle=\frac{1}{\sqrt{2}}( |H\rangle \pm |V\rangle)$. This projection is carried out using a HWP, a QWP, and a PBS. The measurement result, heralded by the target qubit, is expected to produces a sinusoidal function dependent on $\theta$ with a doubled frequency and halved period of $\pi$, which represents a clear signature of successful phase compression: 

\begin{align}
\text{Pr}_{\text{ideal}\pm}(\theta)=|\langle\pm|\psi\rangle_{2\theta}|^2 &= \frac{1}{2}(1 \pm \cos(2\theta))\label{eq:prpm-ideal} \, ,
\end{align}

However, experimental imperfections such as the photons not being perfectly indistinguishable, birefringent phaseshifts induced by PPBSs, and slight errors in the waveplate calibrations required us to modify this formula by adding extra parameters that depend on these imperfections.

\begin{align}
\text{Pr}_\pm(\theta)=|\langle\pm|\psi\rangle_{2\theta}|^2 &= \frac{1}{2}(1 \pm A\cos((2+\delta)\theta+\phi))\, ,
\end{align}

We prepared and compressed equatorial qubit states with varying phases $\theta$ ranging from $-90^\circ$ to $270^\circ$ in $2.5^\circ$ increments. For each phase setting, we collected photon-counting statistics for 30 seconds and repeated each measurement 10 times. Our results in Fig. \ref{fig:1d-plot}(a) show a clear double fringe pattern across a phase interval $2\pi$.

\begin{figure*}
\centering
\includegraphics[scale=0.4]{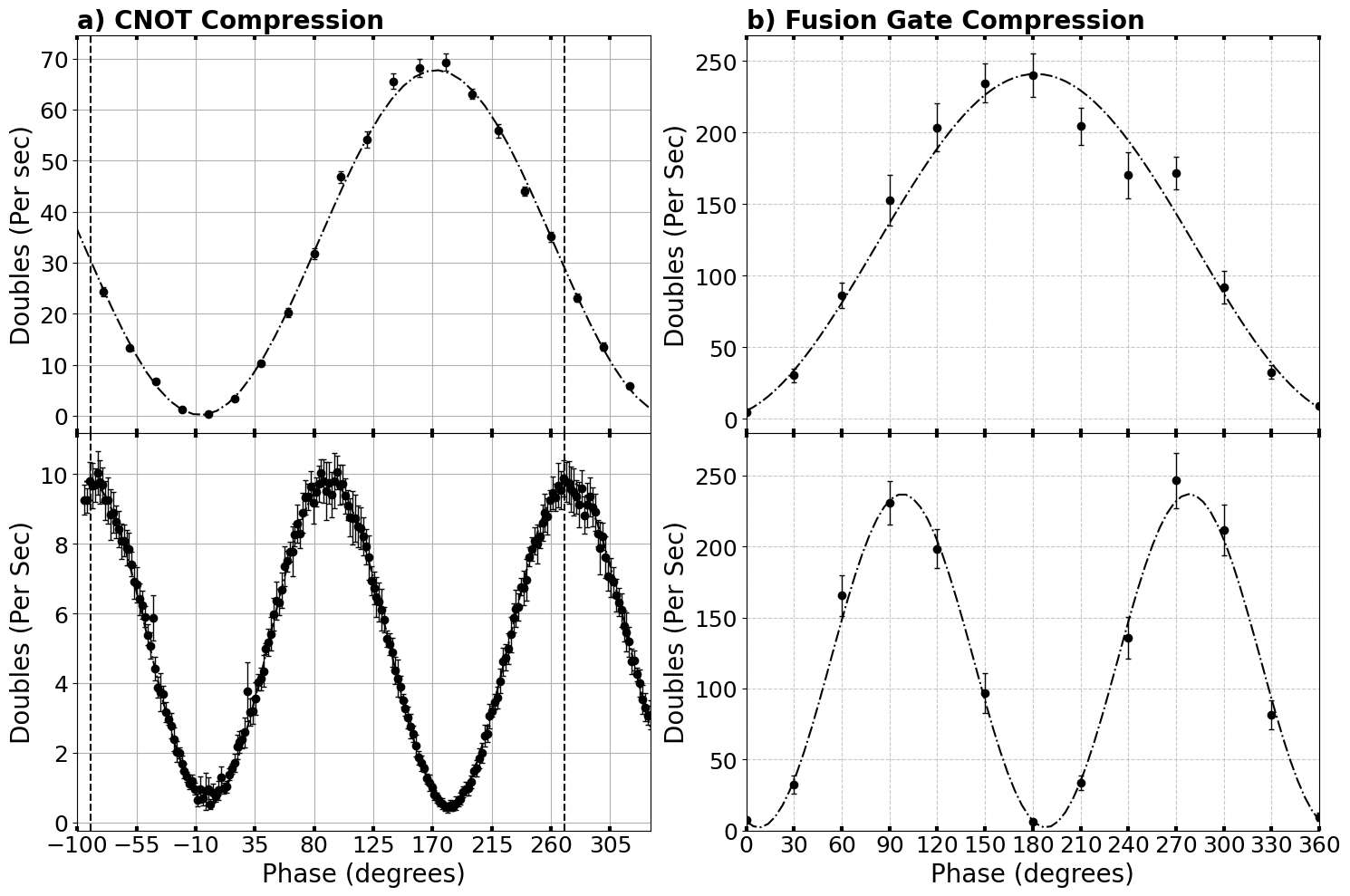}
\vspace*{-5mm}
\caption{ Output of successful  compression. Sample raw data for projecting compressed qubit $|\psi\rangle_{2\theta}=\frac{1}{\sqrt{2}}(|0\rangle+e^{2i\theta}|1\rangle)$ onto the diagonal basis $|+\rangle$.  {\tt CNOT} plots show detections over a one-second counting interval with phase increments of $2.5^\circ$ in $\theta$, set by rotating the HWPs in the state preparation setup Fig \ref{fig:optical_setup}. The single counts corresponds to uncompressed qubits $|\psi\rangle_\theta$ projected onto the diagonal basis with detector dark counts subtracted. The two-photon detection shows oscillations at a doubled frequency with error bars representing $\pm \sigma$ statistical uncertainty, mainly due to laser power fluctuation and Poissonian noise from non-deterministic photon pair generation. In general, fusion gate can add or subtract phases from two different equatorial qubits as shown in \cite{tham_experimental_2020}. A 2D scan demonstrating this is provided in Appendix. Here, this 1D plot shows the compression scenario where the two input phases are identical.}
\label{fig:1d-plot}
\end{figure*}

 We constructed an estimator that yields $\theta$ from the measurements:
\begin{align}
    \hat{\theta} &= \text{arccos}\left( \frac{\text{N}_+ - \text{N}_-}{A(\text{N}_+ + \text{N}_-)}\right) / (2+\delta)-\phi\, ,
\end{align}
Here, $N_\pm$ is the number of counts recorded for projectors $|\pm\rangle\langle\pm|$.
Each uncompressed qubit contains $1\;\text{rad}^{-2}$ of QFI.  While compressing 2 qubits, we expect the failed outcome to contain no information and the successful outcome to contain $4\;\text{rad}^{-2}$ of QFI. Therefore, when we estimate $\theta$ with $N$ successfully compressed qubits, the variance of the estimator, as predicted by the QCRB is the following:
\begin{align}
    \text{Var}(\theta) &= \frac{1}{4N}\;\text{rad}^{2}\\ 
    \sqrt{N} \text{Std}(\theta) &= \frac{1}{2}\;\text{rad}\, ,
\end{align}

 We demonstrated the compression from another perspective by estimating $\theta$ from the compressed qubits. Since the compressed qubits contain four times more information than uncompressed ones, the uncertainty of an estimator $\tilde\theta$ obtained from measuring $N$ compressed qubits should be half of that obtained from measuring $N$ uncompressed qubits. 

We compare the per photon root-mean-squared error (RMSE) and standard deviation against the Quantum Cram\'er-Rao Bound (QCRB) in Fig. \ref{fig:qfi}. The compressed qubits were measured in the optimal basis $\{\ket{\psi_{2\theta+\pi/2}},\ket{\psi_{2\theta-\pi/2}}\}$ to estimate $\theta$ with maximal precision. The measured standard deviations and RMSEs are multiplied by the square root of the number of detected photons, because the photon number varies throughout the experiment, and this normalization enable a fair comparison across all data points. The standard deviation of our measurements (square) closely follows the compressed QCRB limit of $\frac{1}{2}$, confirming that the QFI is successfully compressed and the phase sensitivity is enhanced by the expected factor of 4. 

We observed that the RMSE values (circles) are consistently higher than the standard deviation values (squares) in Fig. \ref{fig:qfi}. This discrepancy indicates the presence of a systematic bias in the estimator. Unlike statistical noise, which decreases with $1/\sqrt{N}$, systematic bias is constant with photon number and thus dominates the error in the high-$N$ regime. To investigate this, we decoupled the error contributions by calculating the systematic bias via $\text{Bias}=\sqrt{\text{RMSE}^2-\sigma^2}$ in Fig. \ref{fig:qfi}. We observed a phase-dependent bias oscillating with an amplitude of approximately $0.03$ rad. This bias is primarily attributed to a drift in photon visibility over time. The data were acquired sequentially, the phase estimations became coupled to this temporal drift, resulting in a biased Maximum Likelihood Estimation (MLE). To mitigate this effect in future work, data acquisition should be randomized in time.

\subsection{An Alternative Compression Scheme for Photonic Qubits: Type-I Fusion}

There exist multiple ways to implement {\tt CNOT} gates in the optical domain.  The downside of the realization of the {\tt CNOT} gate we implemented is that it only succeeds with a 1/9 probability. Due to this probabilistic nature, this implementation preserves the QFI only for the successful cases. In principle, higher success probability can be achieved with more complex linear optical setups using multiple auxiliary qubits, as shown  in   Refs.  \cite{knill_scheme_2001,pittman_probabilistic_2001,pittman_demonstration_2002,pittman_experimental_2003,okamoto_demonstration_2005,okamoto_realization_2011}. In particular, the protocol by Knill \textit{et al} \cite{knill_scheme_2001} enables a  fully deterministic realization of the {\tt CNOT} gate by using a quantum memory to store and teleport gate operations. 

Other platforms like atoms, superconductors etc. can implement deterministic {\tt CNOT} gates \cite{levine2019parallel,kandala2021demonstration} and would not be affected by this problem specific to optical platforms. However, it is possible to circumvent this problem in optical platforms by using an alternative compression scheme that preserves the average QFI with relatively simple operations. This alternative scheme uses linear optical Type-I fusion gates \cite{browne_resource-efficient_2005}, which have been used as a means for generating cluster states for measurement-based quantum computing \cite{bartolucci_fusion-based_2023} and adding the phases of equatorial qubits \cite{tham_experimental_2020}. 
 
 Our fusion-gate compression scheme is illustrated  in Fig. \ref{fig:optical_setup}(b).  The fusion gate starts with a polarizing beam splitter (PBS) which reflects vertical light and transmits horizontal light, thereby implementing the transformation 
\begin{align*}
|H\rangle_1  \otimes  |H\rangle_2 &\mapsto |H\rangle_3  \otimes  |H\rangle_4\\
|H\rangle_1  \otimes  |V\rangle_2 &\mapsto |H\rangle_3  \otimes  |V\rangle_3\\
|V\rangle_1  \otimes  |H\rangle_2 &\mapsto |H\rangle_4  \otimes  |V\rangle_4\\
|V\rangle_1  \otimes  |V\rangle_2 &\mapsto |V\rangle_3  \otimes  |V\rangle_4 \, ,
\end{align*}
where the subscripts $1$ and $2$ label the input ports, while the subscripts $3$ and $4$ label the output ports.  When  two equatorial polarization states with phases $\theta_1$ and $\theta_2$ enter the PBS  through input ports 1 and 2,  the output state at ports 3 and 4 is 
\begin{align}
 \nonumber   &{\tt PBS}  \,(  |e_{\theta_1} \rangle_1 \otimes |e_{\theta_2}\rangle_2)  \\
  \nonumber & =   \frac{1}{\sqrt 2} \,  \frac{ |H\rangle_3 \otimes |H\rangle_{4} +   e^{i  (\theta_1 + \theta_2)}\,  |V\rangle_3 \otimes |V\rangle_{4}}{\sqrt 2}  \\
     &    \quad +  \frac{e^{i\theta_2}}{\sqrt 2}    \, \frac{|H\rangle_{3} \otimes |V\rangle_3 + e^{i(\theta_1-\theta_2)}  \,|H\rangle_{4} \otimes |V \rangle_4 }{\sqrt 2}\, . \label{eq:after_pbs}
\end{align}
This output state  resembles the output state in Eq. (\ref{aftercnot}) in the sense that the state is decomposed into two terms: one carrying a relative phase of $\theta_1 +  \theta_2$, and the other carrying a relative phase of $\theta_1 - \theta_2$. Within our knowledge, there is no practical way to retrieve the information of both terms simultaneously. Hence, we post-select the branch where the photons are traveling along different paths, at the price of losing the photons when they are traveling on the same path: this procedure yields two photons carrying  phase $\theta_1+\theta_2$ with probability $1/2$,  and a discarded outcome with probability $1/2$.  Notably, when $\theta_1  =  \theta_2 =\theta$,  this procedure preserves the total QFI about the parameter $\theta$.   For the remainder of this section, we will assume  $\theta_1  =  \theta_2 =\theta$. 
 
 After the PBS, given that there is one photon on each path, the phase information carried by the two photons can be transferred to a single photon, by applying an HWP rotated to $22.5^\circ$ on one  photon (say, the photon on  path 4), and then measuring it on the $\{|H\rangle, |V\rangle\}$ basis.  The application of the rotated HWP produces the output state
  \begin{align}
\nonumber U_{\tt HWP}(22.5^0) & ~  \frac{|H\rangle_3 \otimes |H\rangle_{4}+e^{2i \theta }|V\rangle_3 \otimes |V\rangle_{4}}{\sqrt 2}\\
&\quad =  \frac{1}{\sqrt 2} \,  |e_{2\theta} \rangle_3 \otimes  |H\rangle_4+  \frac 1{ \sqrt 2} \, |e_{2\theta+\pi}\rangle_3\otimes |V\rangle_4 \, ,
\end{align}
and the measurement on the second photon steers the first photon to either the equatorial  state $|e_{2\theta}\rangle$  or to the equatorial state $|e_{2\theta+\pi}\rangle$.  

\begin{figure*}
\includegraphics[scale=0.7]{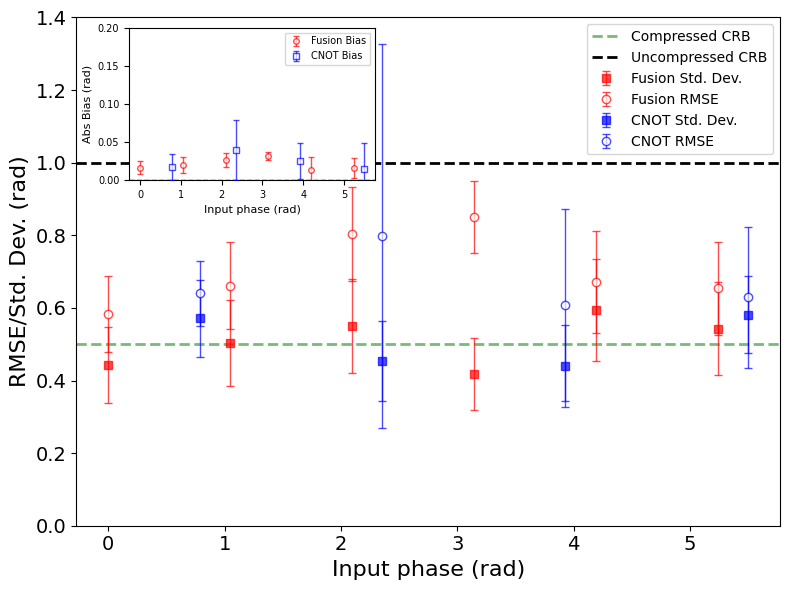}
\caption{ The standard deviation and the root-mean-squared error (RMSE) of the estimated phase. The green dashed line represents the Cram\'er-Rao bound for estimating $\theta$ using and uncompressed qubit $|\psi\rangle_\theta$, while the black dashed line corresponds to the standard deviation for a compressed qubit $|\psi\rangle_{2\theta}$. The squares indicates the experimentally observed variances when estimating a selected set of phases, and the circular dots indicate the per photon RMSEs. The latter are higher due to the bias being scaled up by $\sqrt{N}\approx16.64$ for a mean photon count of $N=277$ for the CNOT data and $N \approx 522$ for Fusion data (see inset). Red markers show the experimental data for phase estimations using compressed qubits from the fusion gate setup, and blue markers represent data from the {\tt CNOT} compression setup (Detailed descriptions of how we constructed the estimators are provided in the main text). The statistical standard deviation was estimated by performing Monte Carlo Simulations on the phase estimator using measured mean photon numbers with Poissonian noise. Inset: The bias, calculated as $\text{Bias}=\sqrt{\text{RMSE}^2-\sigma^2}$, reveals a systematic error of $0.02 - 0.03$ rad ($1^\circ-2^\circ$), which is attributed to imperfect waveplate retardances and drifting visibility over the course of data collection.}
\label{fig:qfi}
\end{figure*}

The above protocol for compressing the QFI from two polarization qubits to one. 
This protocol can be used as the basic  building block for a protocol that compresses the QFI of $N$ qubits into a variable number of qubits, between 1 and  $\lfloor\log_2 N\rfloor$.     The protocol works iteratively as follows: first, divide the initial $N$ qubits into $\lfloor N/2 \rfloor$ pairs,  with one photon remaining unpaired if $N$ is odd.   Then, perform the basic fusion-gate protocol for each pair. This procedure yields a $N'$ qubits with phase $2\theta$, where $N'$ is an integer between $0$ and $\lfloor N/2 \rfloor$. If $N'\ge 2$, then the procedure is iterated on the $N'$ qubits.  It is easy to see that the iteration concludes in at most $\lfloor \log_2 N \rfloor$ steps. Since in each step at most one qubit is left unprocessed, the number of qubits remaining in the end is at most  $\lfloor \log_2 N \rfloor$.  In the best scenario, when $N$ is a power of 2 and all the fusion steps are successful, the protocol outputs a single qubit. In the worst case the compression rate is the same as the rate for the compression of $N$-qubit states in the symmetric subspace. Additionally, for each remaining qubit, a classical bit needs to be stored to keep track of extra $\pi$ phaseshifts.

The experimental implementation of the fusion-gate compression scheme uses the same state preparation procedures as  the {\tt CNOT} implementation.   We demonstrate the building block of $2\to 1$ compression, using a single PBS for the fusion gate.  
Successful compression is identified by detecting coincidence photon events where the two photons took different paths within the setup and projecting the compressed qubit onto the diagonal basis. This allowed us to observe a two-fringe projection pattern, shown in Fig. \ref{fig:1d-plot}(b), similar to what we have observed in the {\tt CNOT} compression data.\\ 

\section*{Conclusion}

In conclusion, we have presented two protocols for compressing quantum Fisher information (QFI) from multiple $N$ single-parameter equatorial qubits into a single qubit and $\log_2(N-1)$ classical bits. We experimentally demonstrated these protocols with a linear optical setup for compressing information of the phase from two qubits to a single one. We discussed how our two-qubit compression technique can be cascaded to compress an arbitrary number of qubits. We verified the compression of QFI by estimating the parameter from the compressed qubit and comparing the result to the Cram\'er-Rao bound. This technique can be used to transfer quantum information using less resources and can be useful in scenarios such as remote and distributed sensing where identical states going through the same interaction can be compressed and transmitted using less resources. 

\bibliography{references2}

\appendix 

\section{Choice of the conditional probability distribution in Eq. (\ref{Mk}).}
Here we show that the condition 
 \begin{align}\label{sameaverage1}
 \sum_{k=1}^K  p_k      \left(    \sum_E  p(E|k)   \,E    \right)^2     =   \left( \sum_E  p(E)   \,E\right)^2            
     \end{align}
     can be satisfied by  probability distributions $\{p(E|k)\}_{k=1}^K$ that have  support on at most two distinct values of $E$.  This result is proven by suitably choosing the conditional probability distribution $p(k|E)$ in Eq. (\ref{Mk}).

Consider the  set $\sf S$ of all functions  $q(E)$  satisfying the conditions  
\begin{align}\label{convex1}
 &q  (E) \ge 0  \qquad \forall E  \\
\label{convex2} &\sum_E q(E)  =1 \\
\label{convex3}&\sum_E   q(E)E   =  \epsilon  \, , 
\end{align}
with  $\epsilon  :=  \sum_E   p(E) \,  E$.   In other words, the set $\sf  S$ consists of probability distributions $q(E)$ that have the same average as the probability distribution $p(E)$ (defined in Eq. (\ref{psidecomp})). Notice that, obviously, $p(E)$ is itself an element of the set $\sf S$.

Eqs. (\ref{convex1})-(\ref{convex3}) show that $\sf  S$ is a compact convex set of dimension $d-2$, where $d$ is the dimension of the vector space spanned by the vectors $\{|e\rangle\}$.  By Caratheodory's theorem, every element of  a compact $(d-2)$-dimensional convex set can be decomposed into a mixture of at most $d-1$ extreme points. 
In particular,  the probability distribution $p(E)$ can be decomposed as 
\begin{align}\label{convexdecomp}
p  (E)  =  \sum_{k=1}^{K}   \,    p_k\,  p(E|k)  \, ,
\end{align}
where $K$ is upper bounded by $d-1$, and, for every $k\in  \{1,\dots, K\}$,    $p(E|k)$ is an extreme element of the set $\sf S$.  

We now show that the extreme elements of $\sf S$ are probability distributions $q(E)$ that assign non-zero probability to at most two values of $E$. 

\begin{lemma}
Let $q(E)$ be an extreme element of the convex set defined by Eqs. (\ref{convex1})-(\ref{convex3}). Then, there exist at most two distinct values $E_0$ and $E_1$ such that $q(E_0)\not =  0$ and $q(E_1)\not =  0$.
\end{lemma}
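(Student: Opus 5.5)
We argue by contraposition: if $q\in\sf S$ has support on three or more distinct energy values, we exhibit a nonzero perturbation $\Delta(E)$ such that both $q\pm t\Delta$ lie in $\sf S$ for small $t>0$. Then $q=\frac12(q+t\Delta)+\frac12(q-t\Delta)$ is a nontrivial convex combination, so $q$ is not extreme. Throughout, distinct energy levels are treated as the coordinates of the distribution, as in Eq. (\ref{psidecomp}), where $p(E)$ is indexed by eigenvalues.

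\textbf{Construction of the perturbation.} Suppose $q(E_0),q(E_1),q(E_2)>0$ for three distinct values $E_0<E_1<E_2$. We look for $\Delta$ supported on $\{E_0,E_1,E_2\}$ with
\begin{align}
\Delta(E_0)+\Delta(E_1)+\Delta(E_2)=0\,,\qquad E_0\Delta(E_0)+E_1\Delta(E_1)+E_2\Delta(E_2)=0\,.
\end{align}
These are two homogeneous linear equations in three unknowns, so a nonzero solution exists. Explicitly, one may take
\begin{align}
\Delta(E_0)=E_2-E_1\,,\qquad \Delta(E_1)=E_0-E_2\,,\qquad \Delta(E_2)=E_1-E_0\,,
\end{align}
which is nonzero because the three energies are distinct.

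\textbf{Membership in $\sf S$.} By linearity, $q\pm t\Delta$ satisfies Eq. (\ref{convex2}) and Eq. (\ref{convex3}) for every $t$. Nonnegativity, Eq. (\ref{convex1}), holds outside the support of $\Delta$ trivially. On the three chosen points it holds as long as $t\,|\Delta(E_i)|\le q(E_i)$ for $i=0,1,2$, which is achievable since each $q(E_i)$ is strictly positive. Both perturbed distributions are therefore in $\sf S$, they are distinct because $\Delta\neq0$, and their midpoint is $q$. This contradicts extremality, so any extreme point has at most two nonzero values.

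\textbf{Main obstacle.} The argument itself is routine; the only care needed is the bookkeeping of degeneracies. If the eigenvalue $E$ is degenerate, "distinct values of $E$" must be interpreted with $q$ as a function of the eigenvalue (as in the paper's notation), not of individual eigenvectors. Otherwise one could perturb between degenerate eigenvectors sharing the same $E$. Fixing this convention at the outset makes the argument clean.
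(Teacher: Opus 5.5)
Your proof is correct and uses the same perturbation argument as the paper. In both, a nonzero $f$ supported on the support of $q$ and orthogonal to $\boldsymbol 1$ and $\mathbf E$ gives the splitting $q=\tfrac12(q+\lambda f)+\tfrac12(q-\lambda f)$, which shows $q$ is not extreme. The one minor difference is that the paper finds such an $f$ by a dimension count on the space ${\cal V}_q$, while you restrict to three support points and write the solution $\Delta$ explicitly. Your version is simply a concrete instance of the paper's argument.
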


{\bf Proof.} If there exists a non-zero function $f(E)$ and a positive number $\lambda>0$ such that such that $  q(E)  \pm \lambda\,   f(E)$ are both elements of $\sf  S$, then clearly $q(E)$ is not an extreme element, because one would have $q(E)  =  \frac 12  \,  (q(E) +  \lambda\, f(E))+  \frac 12  \,  (q(E) -  \lambda\, f(E))$.    This sufficient condition for non-extremality is satisfied if and only if
\begin{align}\label{positivity}
  &-  q(E)  \le  \lambda\,   f(E)    \le  q(E)    \qquad\forall E \\
 &\sum_E f(E)  = 0   \label{convex20} \\
&\sum_E   E\,  f(E)  = 0  \label{convex30}  \, , 
\end{align}

Eq. (\ref{positivity}) is equivalent to the condition 
\begin{align}\label{support}
f(E) \not  = 0    \qquad {\rm only~ if}  \qquad  q(E)  \not  = 0  \, . 
\end{align}
Indeed, if this condition is satisfied, one can always find a $\lambda>  0$ such that  Eq. (\ref{positivity}) is also satisfied.   The functions satisfying condition (\ref{support}) form a real vector space, whose dimension is equal to the number of values of $E$ such that $q(E)  \not  = 0$.  Let us denote this vector space by ${\cal V}_q$ and write the function $f  (E)$ as a vector $\bf f$ with entries labeled by $E$.

Conditions (\ref{convex20}) and (\ref{convex30}) can be written as $  {\bf f } \cdot {\boldsymbol 1}  =   0$ and  ${\bf  f} \cdot  {\bf  E}  =  0$, respectively,  where $\boldsymbol 1$ is the  vector with all entries equal to 1,  and $\bf E$ is the vector with the $E$-th entry equal to  $E$.   
In other words,  $\bf f$ must be orthogonal to the two vectors $\boldsymbol 1$ and $\bf E$.  

Note that, if the vector space ${\cal V}_q$ has dimension larger than 2, then will be surely a vector  $\bf f$  that is orthogonal to $\boldsymbol 1$ and $\bf E$. In this case, $q(E)$ is non-extreme.

Summarizing, if ${\cal V}_q$ has dimension larger than 2, then $q (E)$ is non-extreme.  
Conversely, if $q(E)$ is extreme, then ${\cal V}_q$ must have dimension smaller than or equal to 2.  Recall that the dimension of ${\cal V}_q$ is the number of values of $E$ such that $q(E)\not= 0$.  Hence, we obtained that an extreme probability distribution $q(E)$ can be non-zero on at most 2 values.    $\blacksquare$  

\medskip 
  
 We now apply the above lemma to the probability distributions $p(E|k)$ in Eq. (\ref{convexdecomp}).  Let us denote by $E_{0,k}$ and  $E_{1,k}$ the two values such that    $p(E_{0,k}|  k) >0$ and  $p(E_{1,k}|  k) >0$.   Since the state $|\Psi_{\theta,k} \rangle $  is given by  
 \begin{align}
 |\Psi_{\theta, k}\rangle    =  \sum_E  \,\sqrt{ p(E|k) }\,  e^{i  \theta E }\,  |E\rangle \, ,   \end{align}
 we obtain 
  \begin{align}
\nonumber  |\Psi_{\theta,k}\rangle    & =  \sqrt{  p( E_{0,k} | k )}  \,  e^{-i \theta  E_{0,k}}  \,  |E_{0,k}\rangle \\
& \qquad +  \sqrt{  p( E_{1,k} | k )}  \,  e^{-i \theta  E_{1,k}}  \,  |E_{1,k}\rangle   \, . 
 \end{align}
This state can be transformed into the qubit state
\begin{align}
 |\psi_{\theta,k}\rangle     =  \sqrt{  p( E_{0,k} | k )}  \,  e^{-i \theta  E_{0,k}}  \,  |0\rangle    +  \sqrt{  p( E_{1,k} | k )}  \,  e^{-i \theta  E_{1,k}}  \,  |1\rangle    
 \end{align}
 by applying the operation   $W_k :  =    |0\rangle   \langle   E_{0,k}|   +  |1\rangle  \langle    E_{1,k}|$.     This operation is simply an encoding of the two-dimensional subspace spanned by the vectors $|E_{0,k}\rangle$ and $|E_{1,k}\rangle$ into the state space of a single qubit.  As such, it preserves the quantum Fisher information for all the states in the subspace.  

 Finally, we observe that the probabilities $p_k$ and the conditional probability distributions $p(E|k)$ appearing in the convex decomposition (\ref{convexdecomp}) determine the conditional probability distributions $p(k|E)$ needed to define the measurement operators $M_k$ in Eq. (\ref{Mk}).   The probability distributions are defined in the obvious way, as $p(k|E)  :  =  p(E|k) p_k/p(E)$, where $p(E)$ is the probability distribution appearing in Eq. (\ref{psidecomp}).

\end{document}